\newcommand{\Z}{\mathbb{Z}}
\newcommand{\N}{\mathbb{N}}
\newtheorem{theorem}{Theorem}      %commented for LNCS
\newtheorem{proof}{Proof}
\newtheorem{rem}{Remark}[subsection]
\newtheorem{ex}{Example}[subsection]
\def\BibTeX{{\rm B\kern-.05em{\sc i\kern-.025em b}\kern-.08em
    T\kern-.1667em\lower.7ex\hbox{E}\kern-.125emX}}
\newcommand{\linebreakand}{%
  \end{@IEEEauthorhalign}
  \hfill\mbox{}\par
  \mbox{}\hfill\begin{@IEEEauthorhalign}
}
\begin{document}

\title{Exploit the Leak: Understanding Risks in Biometric Matchers}

\author{DURBET Axel$^1$,\\
axel.durbet@uca.fr
\and
THIRY-ATIGHEHCHI Kevin$^1$,\\ 
kevin.atighehchi@uca.fr
\and
CHAGNON Dorine$^1$,\\
dorine.chagnon@uca.fr
\and
GROLLEMUND Paul-Marie$^2$\\
paul\_marie.grollemund@uca.fr}

\date{%
$^1$LIMOS, Université Clermont-Auvergne, CNRS, Mines Saint-Étienne, Aubière, France\\
$^2$LMBP, Université Clermont-Auvergne, CNRS, Mines Saint-Étienne, Aubière, France\\}

\maketitle

\begin{abstract}
    In a biometric authentication or identification system, the matcher compares a stored and a fresh template to determine whether there is a match. This assessment is based on both a similarity score and a predefined threshold.
    For better compliance with privacy legislation, the matcher can be built upon a privacy-preserving distance.
    Beyond the binary output (`\textit{yes}' or `\textit{no}'), most schemes may perform more precise computations, \textit{e.g.}, the value of the distance.
    Such precise information is prone to leakage even when not returned by the system.
    This can occur due to a malware infection or the use of a weakly privacy-preserving distance, exemplified by side channel attacks or partially obfuscated designs.
    This paper provides an analysis of information leakage during distance evaluation.
    We provide a catalog of information leakage scenarios with their impacts on data privacy.
    Each scenario gives rise to unique attacks with impacts quantified in terms of computational costs, thereby providing a better understanding of the security level.
\end{abstract}

\textbf{Keywords:} Privacy-Preserving Distance, Hamming Distance, Information Leakage, Biometric Security

\section{Introduction}\label{intro}
Biometric authentication protocols involve the comparison of a fresh biometric template with the reference template. This comparison computes the distance between the newly acquired data and the stored template. If this distance is below a given threshold, access is granted; otherwise, it is denied. 
While many protocols use standard metrics such as the Hamming distance~\cite{bernal2023review}, this process can inadvertently leak information that adversaries might exploit to reconstruct the stored template. Vulnerabilities arise from implementation errors, inherent flaws, and server-level attacks such as malware~\cite{sharma2023survey}, which can compromise system-wide security.
Furthermore, Aydin and Aysu~\cite{HomoLeak} and Hashemi~\textit{et al.}~\cite{hashemi2023time} have highlighted an increasing prevalence of side-channel attacks. 
Side-channel techniques, including timing, differential power analysis, cache-based, electromagnetic, acoustic, and thermal attacks, exploit various operational artifacts to extract sensitive information~\cite{sharma2023survey}.
One of the concerns is the partial or total leakage of distance computation information.
Such information leakage poses significant security and privacy risks, especially in sensitive applications like privacy-preserving applications (\textit{e.g.}, biometric recognition systems). 
In this paper, we focus on the following attacks:
\begin{itemize}
\item \textit{Offline exhaustive search attacks} refer to scenarios for which a leaked yet obfuscated database is available for an attacker. The attacker employs the public transformation to verify a candidate vector. This verification may give additional information beyond the minimal information leakage (\textit{`yes'} or \textit{`no'}), for example via side-channel attacks.
\item \textit{Online exhaustive search attacks} correspond to attacks for which an attacker must interact with the biometric system to infer information about the targeted vector. Then, the attacker needs to force the system to leak additional information beyond the minimal information leakage (\textit{`yes'} or \textit{`no'}), for example via a malware infection.
\end{itemize}

\paragraph*{Related Works}
To the best of our knowledge, two papers investigate information leakage of biometric systems using privacy-preserving distance.
Pagnin~\textit{et al.}~\cite{pagnin2014leakage} shows that the output of a privacy-preserving distance can be exploited to infer the hidden input.
This type of attack is considered the most devastating for such systems, as evidenced by Simoens~\textit{et al.}~\cite{Simoens2012AFF}.
The work of Pagnin~\textit{et al.} takes place in the minimal leakage scenario, wherein only the binary output of the biometric system is given to the attacker.
The authors present the \textit{Center Search Attack}, designed to recover the hidden enrolled input for any `valid' biometric template in $\mathbb{Z}_q^n$, where `valid' refers to inputs within a ball centered at the enrolled template and with a radius equal to the decision threshold $t$. 
To efficiently locate a valid input, the authors also examine the exhaustive search attack, particularly its application on binary templates ($q=2$). 
They suggest implementing a \textit{sampling without replacement} strategy using their \textit{Tree algorithm} to streamline the identification of a suitable input for the Center Search Attack.
This efficient identification of a proper input requires a number of authentication attempts that is exponential in the space dimension $n$ minus the threshold $t$.
While their work focuses on the minimal leakage scenario, our analysis includes the consideration of multiple additional information leaks that may arise during the matching operation.

\paragraph*{Contributions}
We analyze the impact of potential information leakage in distance evaluations.
Our contributions detail various leakage scenarios, their corresponding generic attacks, and the computational costs involved:
\begin{itemize}
\item 
We revisit the exhaustive search attack in the scenario of a minimal (one-bit) information leakage, correcting a previously cited result (see~\cite{pagnin2014leakage}) about the costs of optimal and near-optimal strategies.
\item 
We introduce new attack strategies by malicious clients that exploit various levels of non-minimal information leaks from the system. Our complexity results, which detail the cost of these attacks, apply to both \textit{offline exhaustive search attacks} that leverage a leaked (yet obfuscated) database and \textit{online exhaustive search attacks} involving direct interactions with the server.
\item 
We investigate a novel attack, named accumulation attack, where an \textit{honest-but-curious} server accumulates knowledge during client authentication. This type of attack occurs when there is a minor, yet non-negligible, amount of information leakage.
\end{itemize}

The complexities of the attacks, relying on different scenarios, are summarized in Figure~\ref{Leak_table}. 
\begin{table*}
\resizebox{\textwidth}{!}{
\centering
\begin{tabular}{ccccc}
\toprule
\multicolumn{1}{l}{Distance-to-Threshold comparison}   & Leakage                      & Complexity type & Complexity in Big-Oh & Theorem\\
\midrule
\multirow{4}{*}{Below} & Distance                     & Exponential       & $q^{n-\varepsilon}+q\varepsilon$  & \ref{th-below-leak-dist}\\
                               & Positions                    & Exponential       & $q^{n-\varepsilon}+q$  & \ref{th-below-leak-pos}\\
                               & Positions and values & Exponential       & $q^{n-\varepsilon}$               & \ref{th-below-leak-value-and-pos}\\
                               & Positions and values (accumulation) & Linearithmic/Polynomial & $n^{\alpha}\log{n}$ & \ref{th-acc}\\
                               \midrule
\multirow{4}{*}{Both} & Minimal$^{1}$                      & Exponential       & $2^{n-\varepsilon}+n+2\varepsilon$  & \ref{th-both-minimal}\\

                               & Distance                     & Linear       & $nq$ & \ref{th-both-dist-leak}\\
                               & Positions                     & Constant       &  $q$ & \ref{th-both-pos-leak}\\
                               & Positions and values & Constant    & $1$  & \ref{th-both-pos-and-val-leak}       \\

                               \bottomrule
\end{tabular}
}
\caption[]{Summary of all leakage exploits and their complexities with $\alpha$ such that the occurrence of the rarest error is $n^{-\alpha}$ with $\alpha\in\mathbb{R}_{\ge 1}$. 
The Distance-to-Threshold comparison determines if the leak occurs when $d(x,y)\leq \varepsilon$ (below) or when there is no distance requirement between $x$ and $y$ (both).
For all the complexities, $x$ and $y$ are in $\Z_q^n$ with $q\ge2$ except for the minimal leakage where $x$ and $y$ are in $\Z_2^n$. The provided complexities represent worst-case scenarios, except for the accumulation attack where the result is the expectation.
\newline$^{1}$Note that the Big-Oh complexity of the optimal exhaustive search strategy, in the worst-case, is the same as the naive strategy as the minimum of $h(\cdot)$ is $0$.
}
\label{Leak_table}
\end{table*}

\paragraph*{Outline}
Section~\ref{prel} introduces notations and terminologies and classifies the different types of information leakages.
Section~\ref{attacks} begins by revisiting the exhaustive search attack in the minimal (one-bit) information leakage scenario, including a correction of a previously cited result concerning the costs of optimal and near-optimal strategies. It then introduces new strategies for attacks by malicious clients capturing various other types of information leakages, covering both offline and online exhaustive search attacks, with an emphasis on their computational costs. The section concludes by examining accumulation attacks performed by an "honest-but-curious" server during client authentication, detailing the computational cost involved.
Section~\ref{Conclu} provides a discussion of the presented results.

\section{Preliminaries}
\label{prel}
%{\color{blue}
This section introduces the notations as well as the attacker model and, a list of the considered information leakage scenarios.
%}

\subsection{Notations and Attacker Models}
\label{Bg}\label{Notation}
Let $\Z_q^n = \lbrace 0,\dots,q-1 \rbrace^n$ be a metric space equipped with the Hamming distance $d$ and $\varepsilon\in\N$ a threshold.
The Hamming distance is defined by
$$d(x, y) = \left|\lbrace i\in\lbrace 1,\dots n,\rbrace | x_i \not = y_i\rbrace\right|$$
for two vectors $x=(x_1,\dots,x_n)$ and $y=(y_1,\dots,y_n)$ in $\Z_q^n$.
%To model the interaction between the biometric system using a privacy-preserving distance and the attacker, the latter has access to an oracle, designated as $\texttt{Match}_{x,\varepsilon}$. 
Let $\texttt{Match}_{x,\varepsilon}$ denote the oracle modeling the interaction between the biometric system using a privacy-preserving distance and the attacker.
$\texttt{Match}_{x,\varepsilon}$ receives the template selected by the attacker and compares it with the previously enrolled and stored template. If the distance is below the threshold $\varepsilon$, the oracle returns $1$ and $0$ otherwise. In a more formal way, $\texttt{Match}_{x,\varepsilon}$ is a function defined as:
\begin{eqnarray*}
    \texttt{Match}_{x,\varepsilon}:\mathbb{Z}_q^n & \longrightarrow & \lbrace0,1\rbrace\\
    y & \longmapsto & \begin{cases}
        1 \text{ if } d(x,y)\leq \varepsilon.\\
        0 \text{ otherwise.}
    \end{cases}
\end{eqnarray*}

%For the sake of simplicity, the threshold $\varepsilon$ and the hidden template $x$ are omitted when the context is explicit.
A privacy-preserving distance may leak additional information beyond its binary output. Under the specifications of each scenario, the oracle may display this additional information.
The objective of the attacker is to find the hidden template $x$ exploiting the oracle outputs. In the context of a biometric system, the objective of the attacker may be relaxed to simply find $y$ that is close to $x$ with respect to $d$ and $\varepsilon$.

\subsection{Typology of Information Leakage}
\label{Leak_type}
\noindent
In the context of a biometric system, a critical vulnerability arises when information is intercepted between the matcher and the decision module, as illustrated in Figure~\ref{fig:notratha} (point $8$). This figure, inspired by Ratha~\textit{et al.}~\cite{Rathasheme}, provides an overview of the attack points in biometric systems while introducing both the decision module and two additional attack points. 
Except for the accumulation attack, the attacker exploits points $4$ and $8$ in all discussed scenarios. Point $4$ allows the submission of a chosen template, while point $8$ grants access to additional information beyond the binary output. The accumulation attack only necessitates control over the point $8$. For detailed insights into the remaining attack points, readers are referred to Ratha~\textit{et al.}~\cite{Rathasheme}.
There are three main categories of information leakage: 
\begin{itemize}
    \item Below the threshold.
    \item Above the threshold. 
    \item Both below and above the threshold.
\end{itemize} 

In each of these categories, several sub-settings can be identified.
The first one corresponds to the absence of any leakage, resulting in $\texttt{Match}_{x,\varepsilon}$ yielding only the binary output. 
Then, the following information leakages are examined: 
\begin{itemize}
    \item The distance. 
    \item The positions of the errors.
    \item Both the error positions and values.
    \item Both the distance and the positions of the errors.
    \item Both the distance and the positions and their corresponding erroneous values.
\end{itemize} 
It is not relevant to consider that additional information is leaked only above the threshold, as no scheme has such behavior. As a consequence, solely scenarios `below the threshold' and `below and above the threshold' are examined.
The Hamming distance is a measure of the number of differing coordinates between two templates. Therefore, knowledge of the erroneous coordinates implies knowledge of the distance itself. Hence, we do not consider all possible scenarios.

\begin{figure*}
    \centering
    \includegraphics[width=\textwidth]{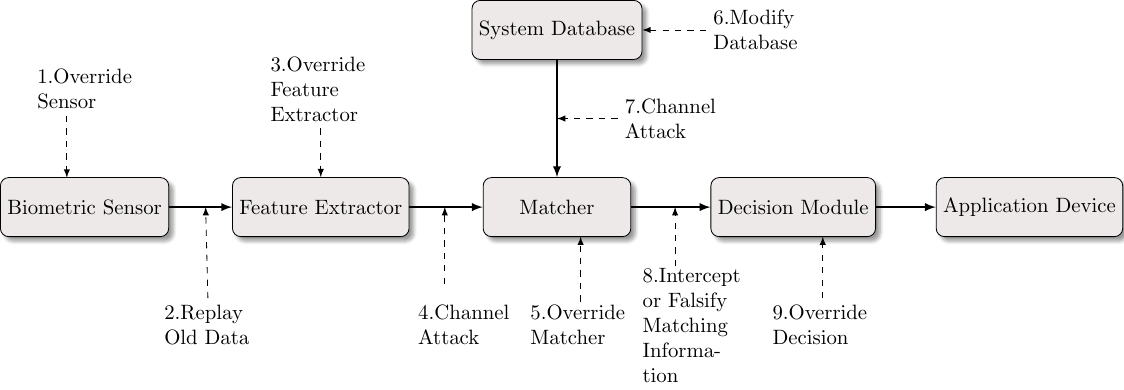}
    \caption{Attack points in a generic biometric recognition system.}
    \label{fig:notratha}
\end{figure*}

\section{Exploiting the Leakage}
\label{attacks}
\subsection{Active Attacks}

\subsubsection{Attack Complexity for the Minimal (One-bit) Leakage}

The exhaustive search may be enhanced for both computationally bounded and unbounded adversaries.
The objective is to identify the minimal number of balls of radius $\varepsilon$ to partition the space and then, to query the center of each ball in order to remove an entire ball if the query fails. This is an instance of the set covering problem.
Pagnin~\textit{et al.}~\cite{pagnin2014leakage} proposed to use such a method and claimed that the number of points that the adversary needs to query is only a factor of $\mathcal{O}(\varepsilon \ln(n+1))$ more than the optimal cover. However, the result is imprecise, as detailed below, mainly because the optimal cover is not given.

\paragraph*{Unbounded Case}
The adversary solves the NP-hard set covering problem~\cite{korte2011combinatorial} to find the optimal covering of $\Z_q^n$ using balls of radius $\varepsilon$.
The adversary exhaustively searches $x$ using at most $q^{n(1-h_q(\varepsilon/n))+o(n)}$ queries to $\texttt{Match}_{x,\varepsilon}$.
The number of vectors involved in a given optimal cover is $\frac{q^n}{|B_{q,\varepsilon}(x)|}$, which can be asymptotically approximated as detailed in what follows.  
First, recall that the cardinal of $B_{q,\varepsilon}(x)$ is $$|B_{q,\varepsilon}(x)| = \sum_{i=0}^\varepsilon \binom{n}{i}(q-1)^i,$$ and that the $q$-ary entropy is $h_q(x) = x \log_q(q - 1) - x \log_q x - (1 - x) \log_q(1 - x).$
Then, using bounds on the binomial coefficient (see~\cite{thomas2006elements}), the result follows if $\frac{\varepsilon}{n} \leq 1-\frac{1}{q}$ holds and if $n$ is large enough.

\paragraph*{Bounded Case}
The adversary may use a greedy algorithm to find a non-optimal covering containing $\frac{q^nH(n)}{|B_{q,\varepsilon}|}$ vectors~\cite{chvatal1979greedy} with $H(n)=\sum_{i=1}^n i^{-1}$ the $n$-th harmonic number.
The adversary then finds a solution with an exhaustive search in at most $\frac{q^nH(n)}{|B_{q,\varepsilon}|}$ queries.
In order to provide a more intuitive value, notice that $\frac{q^nH(n)}{|B_{q,\varepsilon}|}$ can be bounded up by $\frac{q^n(\ln(n)+1)}{|B_{q,\varepsilon}|}$.
As in the unbounded case, using the $q$-ary entropy and Stirling's approximation, this non-optimal covering leads the attacker to make at most $q^{n(1-h_q(\varepsilon/n))+o(n)}$ queries, as $\log_q(\ln(n)+1)=o(n)$.

\subsubsection{Attack Complexities for Leakage Below the Threshold}
\label{below}
Leakage below the threshold is considered in this section. 
Given the hidden target $x$, querying $y$ such that $d(x,y)\leq\varepsilon$ to the oracle $\texttt{Match}_{x,\varepsilon}$ provides information beyond the binary output.

\paragraph{Leakage of the Distance}
The first case occurs when the distance is given to the attacker as extra information. 
\begin{theorem}\label{th-below-leak-dist}
    Given $\varepsilon$ a threshold, $x\in\mathbb{Z}_q^n$ a vector, and $\texttt{Match}_{x,\varepsilon}$ leaks the distance below the threshold, an attacker can retrieve $x$ in the worst case in $\mathcal{O}(q^{n-\varepsilon}+q\varepsilon)$ queries to $\texttt{Match}_{x,\varepsilon}$.
\end{theorem}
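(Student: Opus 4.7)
My approach splits the attack into two phases: a covering-based search for a reference point $c^{*}$ with $d(x,c^{*}) \leq \varepsilon$, and a coordinate-by-coordinate reconstruction of $x$ from $c^{*}$ using the leaked distance values.

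For phase 1 I would use the structured $\varepsilon$-covering $\mathcal{C} = \{(a_1, \dots, a_{n-\varepsilon}, 0, \dots, 0) : a_i \in \Z_q\}$, which has cardinality $q^{n-\varepsilon}$. For every $x \in \Z_q^n$, the codeword $c^{*} = (x_1, \dots, x_{n-\varepsilon}, 0, \dots, 0)$ satisfies $d(x,c^{*}) = |\{i>n-\varepsilon : x_i \neq 0\}| \leq \varepsilon$, so $\mathcal{C}$ is a valid cover. The attacker queries every element of $\mathcal{C}$, spending $q^{n-\varepsilon}$ calls to $\texttt{Match}_{x,\varepsilon}$, and collects the exact distance whenever the oracle reports a match. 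Because any $c \in \mathcal{C}$ satisfies $d(x,c) = d(x,c^{*}) + |\{i \leq n-\varepsilon : c_i \neq x_i\}|$, the matching codeword with the smallest reported distance is exactly $c^{*}$ itself. Phase 1 therefore recovers the prefix $(x_1, \dots, x_{n-\varepsilon})$ and also the integer $t := d(x,c^{*})$, which counts the nonzero entries among $(x_{n-\varepsilon+1}, \dots, x_n)$.

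In phase 2 I would resolve each of the last $\varepsilon$ coordinates independently. For a fixed $i \in \{n-\varepsilon+1, \dots, n\}$, the attacker modifies $c^{*}$ only at position $i$ by successively setting its value to $v = 1, 2, \dots, q-2$. A new distance equal to $t+1$ forces $x_i = 0$; a new distance of $t-1$ forces $x_i = v$; a new distance of $t$ means $x_i \notin \{0,v\}$ and the next candidate is tried. After at most $q-2$ queries the value $x_i$ is either observed directly or uniquely deduced as the last remaining alternative. Summing over the $\varepsilon$ suffix coordinates gives at most $\varepsilon(q-2)$ extra queries, so the total cost is $q^{n-\varepsilon} + \varepsilon(q-2) = \mathcal{O}(q^{n-\varepsilon} + q\varepsilon)$.

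The main obstacle is justifying that phase 1 pins the prefix down for free rather than returning an arbitrary codeword within the threshold. If phase 1 only guaranteed some $y_0$ with $d(x,y_0) \leq \varepsilon$, then phase 2 would need to locate the unknown mismatch positions among all $n$ coordinates, which would break the $q\varepsilon$ bound. The resolution lies in the algebraic form of $\mathcal{C}$: because every codeword shares the same zero suffix, each prefix mismatch strictly increases the reported distance, so $c^{*}$ is the unique minimizer of $d(x,\cdot)$ over $\mathcal{C}$ and is identifiable by the attacker after the $q^{n-\varepsilon}$ enumeration.
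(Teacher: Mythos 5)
Your proposal is correct and follows essentially the same two-phase strategy as the paper's proof: an exhaustive search over the $q^{n-\varepsilon}$ vectors obtained by fixing $\varepsilon$ coordinates, followed by a coordinate-by-coordinate resolution of the remaining $\varepsilon$ positions via the leaked distance, at a cost of $\mathcal{O}(q\varepsilon)$. The one point where you are more careful than the paper is in noting that the first accepted query need not have the correct prefix, so the attacker should retain the codeword with the minimal leaked distance over the whole cover; the paper's hill-climbing step silently assumes this identifiability, so your version closes a small gap in the published argument.
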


\begin{proof}
The system, using the Hamming distance, requires a minimum of $n-\varepsilon$ accurate coordinates to output $0$.
Since the attacker specifically targets $n-\varepsilon$ coordinates (the attacker arbitrarily chooses $\varepsilon$ coordinates that do not change), an exhaustive search attack is performed in at most $q^{n-\varepsilon}$ steps to get accepted by the system.
Then, a hill-climbing attack runs on the remaining $\varepsilon$ coordinates to minimize the distance at each step. Coordinate by coordinate, the attacker obtains the right value if the distance decreases. Since there are $q$ different values to test on $\varepsilon$ coordinates, determining the correct ones requires a maximum of $(q-1)\varepsilon$ steps. Then, the overall complexity is $\mathcal{O}(q^{n-\varepsilon}+q\varepsilon)$.\hfill $\blacksquare$
\end{proof}

\paragraph{Leakage of the Positions}
The positions of the errors are the extra information given to the attacker, while their values remain secret. 
\begin{theorem}\label{th-below-leak-pos}
    Given $\varepsilon$ a threshold, $x\in\mathbb{Z}_q^n$ a vector, and $\texttt{Match}_{x,\varepsilon}$ leaks the positions of the errors below the threshold, an attacker can retrieve $x$ in the worst case in $\mathcal{O}(q^{n-\varepsilon}+q)$ queries to~$\texttt{Match}_{x,\varepsilon}$.       
\end{theorem}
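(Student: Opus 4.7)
The plan is to reuse the two-phase template of Theorem~\ref{th-below-leak-dist} but to exploit the richer leakage (positions of errors rather than just their count) to compress the second phase from $(q-1)\varepsilon$ single-coordinate trials down to $q-1$ batched trials.

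Phase~1 is identical to the distance-leakage proof: fix any $\varepsilon$ coordinates to arbitrary values and enumerate the $q^{n-\varepsilon}$ assignments of the remaining $n-\varepsilon$ coordinates. Because at most $\varepsilon$ positions (precisely, those that were fixed) can disagree with $x$, some enumeration necessarily produces a $y$ with $d(x,y)\leq\varepsilon$. At that query the oracle replies $1$ and, by hypothesis, additionally reveals the error set $E\subseteq\{1,\dots,n\}$ of size at most $\varepsilon$.

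Phase~2 recovers the correct values on $E$ in at most $q-1$ batched queries. Pick a test value $v_1\in\Z_q$ and form $y^{(1)}$ that agrees with $y$ outside $E$ and equals $v_1$ on every $i\in E$. Only originally-erroneous positions are altered, so $d(x,y^{(1)})\leq |E|\leq\varepsilon$, and the oracle returns a new error set $E_1\subseteq E$; the positions in $E\setminus E_1$ are exactly those with $x_i=v_1$ and can be locked to that value from then on. Iterating with fresh values $v_2,\dots,v_{q-1}$---each step keeping already-resolved coordinates at the value just recovered and setting every still-unresolved coordinate to the new test value $v_j$---each query either locks in further positions or leaves them flagged as erroneous. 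After $q-1$ such queries, any coordinate still flagged must equal the unique untested value, so $x$ is fully determined.

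Summing the two phases gives at most $q^{n-\varepsilon}+(q-1)=\mathcal{O}(q^{n-\varepsilon}+q)$ queries. The main point to verify, and the only real obstacle, is that the oracle continues to leak throughout Phase~2, i.e., that the queries never drift outside the ball of radius $\varepsilon$: since each batched query only perturbs coordinates that are currently either locked at their correct value or still marked as errors, the distance is monotonically non-increasing along the sequence $y,y^{(1)},y^{(2)},\dots$, so the bound $d(x,y^{(j)})\leq\varepsilon$ is preserved. The qualitative gain over Theorem~\ref{th-below-leak-dist} is that positional information lets one test a single value against all unresolved coordinates simultaneously, eliminating the $\varepsilon$ factor in the hill-climbing term.
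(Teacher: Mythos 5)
Your proposal is correct and follows essentially the same two-phase strategy as the paper: an exhaustive search over $q^{n-\varepsilon}$ candidates to land inside the acceptance ball, followed by batch-testing each value $v\in\Z_q$ on all still-unresolved error positions simultaneously, for $q-1$ further queries. Your explicit check that every Phase~2 query stays within distance $\varepsilon$ of $x$ (so the oracle keeps leaking) is a detail the paper leaves implicit, but the argument is the same.
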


\begin{proof}
As the leakage occurs solely below the threshold, the first step is to find a vector $y\in\mathbb{Z}_q^n$ such that $d(x,y)\leq \varepsilon$. To identify such a vector, the attacker performs an exhaustive search attack in $q^{n-\varepsilon}$ steps, as previously shown. Since $\varepsilon$ coordinates remain unknown, and each coordinate ranges from $0$ to $q-1$, every possibility must be examined. By testing all possibilities simultaneously -- for instance, testing all coordinates at $0$, then all coordinates at $1$, and so forth up to $q-2$ while retaining the correct values -- the original vector can be identified in no more than $q-1$ queries (refer to the example illustrated in Figure~\ref{fig:reverse_master_mind}). Therefore, the complexity of the attack for recovering $x$ is $\mathcal{O}(q^{n-\varepsilon}+q)$.
\hfill $\blacksquare$
\end{proof}

Figure~\ref{fig:reverse_master_mind} gives a representation of the attack described above in the case $\mathbb{Z}_4^5$ and the hidden vector or the missing coordinates is $(0,1,3,2,2)$. Note that the actual complexity is $q-1$ since the final exchange is unnecessary, as the coordinates at $q-1$ become known after $q-1$ queries by inference.

\paragraph{Leakage of the Positions and the Values}
When a vector below the threshold is given to the oracle $\texttt{Match}_{x,\varepsilon}$, the attacker gets information about both error positions and their values. This is similar to an error-correction mechanism designed to correct errors below a given threshold. Note that in the binary case, this scenario is the same as the previous one, hence the only considered case is $q>2$.
\begin{theorem}\label{th-below-leak-value-and-pos}
    Given $\varepsilon$ a threshold, $x\in\mathbb{Z}_q^n$ a vector, and $\texttt{Match}_{x,\varepsilon}$ leaks the positions and the values of the errors below the threshold, an attacker can retrieve $x$ in $\mathcal{O}(q^{n-\varepsilon})$ queries to $\texttt{Match}_{x,\varepsilon}$.
\end{theorem}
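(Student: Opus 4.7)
The plan is to reduce the problem to the search phase already analyzed in Theorem~\ref{th-below-leak-pos}, observing that once the leakage is triggered, reconstruction of $x$ is immediate and requires no further queries.

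First I would argue that as soon as the attacker submits any $y$ with $d(x,y)\le\varepsilon$, the oracle reveals, for each erroneous coordinate $i$, both its index $i$ and the correct value $x_i$. Since the attacker knows the vector $y$ they submitted, they simply overwrite each flagged coordinate of $y$ with the leaked value $x_i$ and leave the non-flagged coordinates untouched (these already agree with $x$ by definition of "non-error"). The resulting vector is exactly $x$. Hence this scenario behaves as a one-shot full-correction mechanism: the whole cost of the attack is the cost of producing a single query that lands within the ball of radius $\varepsilon$ around $x$.

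Next I would reuse the search strategy from the proof of Theorem~\ref{th-below-leak-pos}: fix arbitrarily $\varepsilon$ coordinates (say, to $0$) and exhaustively vary the remaining $n-\varepsilon$ coordinates across their $q^{n-\varepsilon}$ possible assignments. Since the fixed block contributes at most $\varepsilon$ mismatches regardless of $x$, there is at least one assignment of the $n-\varepsilon$ varying coordinates — namely the one matching the corresponding coordinates of $x$ — that yields a vector at Hamming distance at most $\varepsilon$ from $x$. Thus the match is triggered after at most $q^{n-\varepsilon}$ queries, and the complexity $\mathcal{O}(q^{n-\varepsilon})$ announced in the statement follows directly, with no additive $q$ or $q\varepsilon$ term as in the previous two theorems because no post-processing queries are needed.

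The only mildly subtle point I would flag is ensuring that the assumption $q>2$ is actually used or needed. In the binary case, leaking positions already determines the values (the only alternative value at a leaked position is the complement), so this theorem collapses onto Theorem~\ref{th-below-leak-pos} and the additional $\mathcal{O}(q)=\mathcal{O}(1)$ term is absorbed into the big-Oh; the separate statement for $q>2$ is what allows the clean $\mathcal{O}(q^{n-\varepsilon})$ bound without any residual correction cost. No serious obstacle is expected: the argument is essentially "search, then read off the correction for free."
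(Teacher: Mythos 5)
Your proposal is correct and follows essentially the same route as the paper's proof: an exhaustive search over the $q^{n-\varepsilon}$ assignments of the non-fixed coordinates to land inside the acceptance ball, after which the leaked positions and values immediately yield $x$ with no further queries. Your added details (why fixing $\varepsilon$ coordinates guarantees a hit, and the $q>2$ remark) merely make explicit what the paper leaves implicit from the preceding theorems.
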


\begin{proof}
First, an exhaustive search is performed to find a vector $y$ for which the distance is below the threshold, for a cost of $\mathcal{O}(q^{n-\varepsilon})$. Then, given the error positions and the corresponding error values, $y$ yields immediately the recovery of $x$. 
In the end, the complexity of the attack is $\mathcal{O}(q^{n-\varepsilon})$.
\hfill $\blacksquare$
\end{proof}

\subsubsection{Leakage Below and Above the Threshold}
\label{both}
The second scenario is considered in this section, which involves a leakage independent of the threshold. In other words, when a hidden vector $x$ is targeted, the queried vector $y$ to the oracle $\texttt{Match}_{x,\varepsilon}$ results in the leak of additional information.

\paragraph{Minimal Leakage (single bit of information leakage)}
The basic usage of the system is characterized by the minimal leakage scenario, where the binary output itself is considered a necessary leakage. This minimal leakage is indispensable for the system's work and is consistent across these scenarios as the system always responds. Remark that if the server does not answer above the threshold, the non-answer gives the attacker the wanted information.
\begin{theorem}\label{th-both-minimal}
    Given $\varepsilon$ a threshold, $x\in\mathbb{Z}_2^n$ a vector, and $\texttt{Match}_{x,\varepsilon}$ that does not leak any extra information, an attacker can retrieve $x$ in $\mathcal{O}(2^{n-\varepsilon}+n+2\varepsilon)$ queries to $\texttt{Match}_{x,\varepsilon}$.
\end{theorem}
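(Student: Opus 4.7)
The plan is a three-phase attack: first invoke the ball-cover search to pin $x$ into a Hamming ball of radius $\varepsilon$, then walk deterministically from that ball's reference point to a vector at the exact boundary distance $\varepsilon$, and finally read off every coordinate of $x$ by single-bit probing. The first step is a direct reuse of the covering-code argument already given for the minimal-leakage analysis: the attacker enumerates the centres of an optimal (or near-optimal) cover of $\mathbb{Z}_2^n$ by balls of radius $\varepsilon$ and stops at the first accepted query, obtaining some $y_0\in\mathbb{Z}_2^n$ with $d(x,y_0)=d_0\leq\varepsilon$ in $\mathcal{O}(2^{n-\varepsilon})$ queries.

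Phase two pushes $y_0$ to distance exactly $\varepsilon$. The attacker flips the coordinates of $y_0$ in the fixed order $1,2,\ldots$, querying $\texttt{Match}_{x,\varepsilon}$ after each flip and never reverting. Because a single bit flip changes the Hamming distance by exactly $\pm 1$, the sequence of distances $d_0,d_1,d_2,\ldots$ forms a deterministic walk consisting of $d_0$ ``down'' steps (one per coordinate where $y_0$ and $x$ already disagree) and $n-d_0$ ``up'' steps, arranged according to the positions of those disagreements. Let $k_1$ be the index of the first rejected query; then $d_{k_1}=\varepsilon+1$, and so the pre-reject vector $y^*:=y_{k_1-1}$ satisfies $d(x,y^*)=\varepsilon$. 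The worst-case length of the walk arises when every down-step precedes every up-step, bringing the walk down to $0$ in $d_0\leq\varepsilon$ queries and then back up to $\varepsilon+1$ in another $\varepsilon+1$ queries, for a total of at most $2\varepsilon+1$.

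Phase three extracts $x$ coordinate by coordinate. For each $i\in\{1,\ldots,n\}$, the attacker queries the vector obtained from $y^*$ by flipping bit $i$; because $d(x,y^*)=\varepsilon$ exactly, the oracle accepts iff the flip drops the distance to $\varepsilon-1$, \ie iff $y^*_i\neq x_i$, and rejects iff $y^*_i=x_i$. Each of the $n$ probes therefore recovers one bit of $x$, and summing the three phases yields the announced bound $\mathcal{O}(2^{n-\varepsilon}+n+2\varepsilon)$. The main obstacle is Phase two: one has to formally verify that, under the mild assumption $n\geq 2\varepsilon+1$ which holds in all interesting regimes, the walk must cross the threshold, and that the adversarial permutation of down/up steps indeed saturates the $2\varepsilon+1$ bound. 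The degenerate regime $n<2\varepsilon+1$ needs a separate short argument in which $\overline{y_0}$ itself already lies in the ball of radius $\varepsilon$ and the probing step of Phase three applies directly.
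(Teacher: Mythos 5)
Your proposal is correct and follows the same route as the paper: an exhaustive search over a radius-$\varepsilon$ cover of $\mathbb{Z}_2^n$ costing $\mathcal{O}(2^{n-\varepsilon})$ queries, followed by what your Phases two and three reconstruct from scratch, namely exactly the Center Search Attack of Pagnin \emph{et al.} that the paper invokes as a black box for the remaining $n+2\varepsilon$ queries. The only slip is your remark on the degenerate regime $n<2\varepsilon+1$: there, $\overline{y_0}$ lying in the ball does not make Phase three "apply directly," since the bit-probing step needs $d(x,y^*)=\varepsilon$ exactly (a point at distance strictly below $\varepsilon$ yields only accepts and hence no information), but this edge case is immaterial to the stated bound and is not treated by the paper either.
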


\begin{proof}
As in the previous cases, the attacker seeks a vector $y$ below the threshold. Such a vector is found by exhaustive search in $2^{n-\varepsilon}$ steps. Then, the attacker performs the center search attack~\cite{pagnin2014leakage} to retrieve the original data in at most $n+2\varepsilon$ queries. The complexity of the attack to find $x$ is $\mathcal{O}(2^{n-\varepsilon}+n+2\varepsilon)$.\hfill $\blacksquare$
\end{proof}

\paragraph{Leakage of the Distance}
In this case, $d(x,y)$ the distance between $y\in\mathbb{Z}_q^n$ the fresh template and $x\in\mathbb{Z}_q^n$ the old template is leaked to the attacker regardless of the threshold.
\begin{theorem}\label{th-both-dist-leak}
    Given $\varepsilon$ a threshold, $x\in\mathbb{Z}_q^n$ a vector, and $\texttt{Match}_{x,\varepsilon}$ leaks the distance, an attacker can retrieve $x$ in $\mathcal{O}(nq)$ queries to $\texttt{Match}_{x,\varepsilon}$.       
\end{theorem}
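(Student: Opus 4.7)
The plan is to exploit the full distance leakage to recover $x$ one coordinate at a time via a coordinate-sweep strategy. The core observation is that if all coordinates of the query vector $y$ except the $i$-th are held fixed and $y_i$ is allowed to range over $\{0,1,\ldots,q-1\}$, then $d(x,y)$ viewed as a function of $y_i$ attains its minimum exactly at $y_i = x_i$; indeed, the value at that minimum is strictly smaller (by $1$) than at any other value of $y_i$, since every other coordinate contributes a constant to the distance.

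Concretely, I would initialize $y \leftarrow (0,\ldots,0)$ and process coordinates in order $i = 1, 2, \ldots, n$. For each $i$, I would issue at most $q$ queries to $\texttt{Match}_{x,\varepsilon}$ in which only $y_i$ varies through $\{0,1,\ldots,q-1\}$, with previously recovered coordinates locked to their correct values and the remaining ones kept at $0$. Since the oracle returns $d(x,y)$ each time, the value of $y_i$ achieving the smallest distance is assigned as $x_i$ and the procedure advances to coordinate $i+1$.

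Correctness follows from the decomposition of Hamming distance as a sum of per-coordinate mismatch indicators: altering only $y_i$ affects none of the other summands, so the variation of $d(x,y)$ between two consecutive queries reflects solely whether $y_i = x_i$. Summing over the $n$ coordinates gives a total of at most $nq$ queries (easily trimmed to $n(q-1)+1$ by recycling a single baseline query per coordinate, but this does not change the asymptotic order), and hence complexity $\mathcal{O}(nq)$.

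I expect no substantive obstacle in the argument, since once leakage of $d(x,y)$ is granted, the coordinate sweep is essentially forced. The only point requiring explicit justification is the uniqueness of the minimizing value in each sweep, which is immediate from the definition of the Hamming distance together with the fact that $x_i$ itself lies in $\{0,\ldots,q-1\}$ and is therefore among the tried values.
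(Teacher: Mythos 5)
Your coordinate-sweep is exactly the paper's argument: the paper's proof is a coordinate-by-coordinate hill-climbing attack using the leaked distance, with $q$ candidate values per coordinate and $n$ coordinates, giving $\mathcal{O}(nq)$ queries. Your write-up is correct and simply spells out the minimization and correctness details that the paper leaves implicit.
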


\begin{proof}
    As the attacker has access to the distance, it is possible to perform a hill-climbing attack, trying to minimize the distance at each step. The strategy is to find the vector $y$, coordinate by coordinate. As each coordinate has $q$ possible values and there are $n$ coordinates, this is done in $\mathcal{O}(nq)$ steps. 
\hfill $\blacksquare$
\end{proof}

\paragraph{Leakage of the Positions}
The extra information given to the attacker is the positions of the errors.
\begin{theorem}\label{th-both-pos-leak}
    Given $\varepsilon$ a threshold, $x\in\mathbb{Z}_q^n$ a vector, and $\texttt{Match}_{x,\varepsilon}$ leaks the positions of the errors, an attacker can retrieve $x$ in $\mathcal{O}(q)$ queries to $\texttt{Match}_{x,\varepsilon}$.       
\end{theorem}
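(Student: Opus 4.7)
The plan is to exploit the fact that now the error positions are leaked for every query, independently of whether the queried vector lies below or above the threshold. This removes the need for the initial exhaustive search that appeared in Theorem~\ref{th-below-leak-pos}, so the attack reduces to the second phase of that proof.

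Concretely, I would have the attacker submit the $q-1$ constant vectors $y^{(j)} = (j,j,\dots,j) \in \mathbb{Z}_q^n$ for $j = 0, 1, \dots, q-2$, one per query. For each query $y^{(j)}$, the oracle returns the set $E_j \subseteq \{1,\dots,n\}$ of positions where $x_i \neq j$; hence its complement $\{1,\dots,n\} \setminus E_j$ is exactly the set of indices $i$ with $x_i = j$. Iterating over $j = 0, \dots, q-2$ therefore identifies every coordinate that takes any of these values, and any index $i$ that is not recovered after the $q-1$ queries must satisfy $x_i = q-1$ by elimination. This yields $x$ in $q-1 = \mathcal{O}(q)$ queries.

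The only real subtlety is to justify that no exhaustive search preamble is required: since the leakage is now guaranteed above the threshold as well, the very first query on $y^{(0)}$ already discloses the positions where $x_i \neq 0$ even if $d(x,y^{(0)}) > \varepsilon$. This is the main conceptual difference with Theorem~\ref{th-below-leak-pos}, and it is what kills the $q^{n-\varepsilon}$ term. Beyond that observation, the argument is essentially the same reverse-mastermind strategy illustrated in Figure~\ref{fig:reverse_master_mind}, so the proof can be written in a few lines.
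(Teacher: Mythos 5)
Your proposal is correct and follows essentially the same reverse-mastermind strategy as the paper's proof, which queries the constant vectors $(0,\dots,0)$ through $(q-1,\dots,q-1)$ and reads off each coordinate from the leaked error positions (cf.\ Figure~\ref{fig:reverse_master_mind}). Your added observation that the final query can be dropped by inferring the remaining coordinates to be $q-1$ matches the remark the paper makes after Theorem~\ref{th-below-leak-pos}, and your explicit justification that no exhaustive-search preamble is needed is a welcome clarification but not a different argument.
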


\begin{proof}
She tries the vector $(0,\ldots,0)$, $(1,\ldots,1)$ up to, $(q-1,\ldots,q-1)$ and keep for each coordinate the right value (see Figure~\ref{fig:reverse_master_mind}). Hence, the complexity of the attack to recover $x$ is $\mathcal{O}(q)$.\hfill $\blacksquare$
\end{proof}

\begin{figure}
    \centering
    \addtolength{\tabcolsep}{-0.14cm}
    \begin{tabular}{cccccc}
        \uline{Queries:} &  $(\boxed{0},$ & $\underset{\times}{0},$ & $\underset{\times}{0},$ & $\underset{\times}{0},$ & $\underset{\times}{0})$ \\
                         &  $(\underset{\times}{1},$ & $\boxed{1},$ & $\underset{\times}{1},$ & $\underset{\times}{1},$ & $\underset{\times}{1})$ \\
                         &  $(\underset{\times}{2},$ & $\underset{\times}{2},$ & $\underset{\times}{2},$ & $\boxed{2},$ & $\boxed{2})$ \\
                         &  $(\underset{\times}{3},$ & $\underset{\times}{3},$ & $\boxed{3},$ & $\underset{\times}{3},$ & $\underset{\times}{3})$ \\
        \uline{Solution:} &  $(0,$ & $1,$ & $3,$ & $2,$ & $2)$ \\
    \end{tabular}
    \caption{Exploiting the error position leaked in the case $\mathbb{Z}_4^5$ and the hidden vector or missing coordinates is $(0,1,3,2,2)$.}
    \label{fig:reverse_master_mind}
\end{figure}
\addtolength{\tabcolsep}{0cm}

\paragraph{Leakage of the Positions and the Values}
In this last case, the positions of the errors and corresponding values are leaked. Unlike the scenario of leakage below the threshold, such a leak provides an error-correcting code mechanism that operates irrespective of any distance and threshold. \begin{theorem}\label{th-both-pos-and-val-leak}
    Given $\varepsilon$ a threshold, $x\in\mathbb{Z}_q^n$ a vector, and $\texttt{Match}_{x,\varepsilon}$ leaks the positions of the errors and their values, an attacker can retrieve $x$ in $\mathcal{O}(1)$ queries to $\texttt{Match}_{x,\varepsilon}$.
\end{theorem}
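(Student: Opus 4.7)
The plan is to observe that with positions-and-values leakage acting independently of the threshold, one single query suffices, because the oracle effectively performs full error correction on any submitted vector.

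More concretely, I would proceed as follows. First, the attacker picks an arbitrary fixed vector $y \in \mathbb{Z}_q^n$, say $y = (0,0,\ldots,0)$. The attacker then submits $y$ to $\texttt{Match}_{x,\varepsilon}$. By the leakage assumption, in addition to the binary output the oracle returns the set $E = \{i \in \{1,\ldots,n\} \mid x_i \neq y_i\}$ of positions where $x$ and $y$ differ, together with the erroneous values; since the errors are with respect to the stored template $x$, the leaked values are exactly $\{x_i \mid i \in E\}$.

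From this single response, the attacker reconstructs $x$ coordinatewise: for $i \in E$ the attacker sets the $i$-th coordinate to the leaked value $x_i$, and for $i \notin E$ the attacker sets the $i$-th coordinate to $y_i$ (since by definition of $E$ we have $x_i = y_i$ in that case). This recovers $x$ exactly in one oracle query, independently of $\varepsilon$, $n$, and $q$, giving the announced $\mathcal{O}(1)$ complexity.

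There is essentially no obstacle here: the only subtlety worth flagging is that the leakage must be interpreted as leaking the true values of $x$ at the error positions (rather than the values of $y$, which the attacker already knows), and that this scenario operates regardless of whether $y$ lies within the acceptance ball, which is precisely what distinguishes it from Theorem~\ref{th-below-leak-value-and-pos} and what removes the $q^{n-\varepsilon}$ preliminary search cost.
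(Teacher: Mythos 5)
Your proposal is correct and is essentially the paper's own argument: submit one arbitrary vector, read off the leaked error positions and values, and correct them to recover $x$ in a single query. The paper's proof is a one-line version of exactly this, so there is nothing further to add.
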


\begin{proof}
The submission of any vector gives the position of each error, and how to correct them, yielding a complexity in $\mathcal{O}(1)$.\hfill $\blacksquare$
\end{proof}

\subsection{Accumulation Attack: A Passive Attack}
\label{AccuAtt:section}
During the client authentications, the attacker passively gathers information by observing errors leaked by the server.
More specifically, the server leaks a list of positions and errors computed over the integers (\textit{i.e.}, $x_i - y_i$) made by a genuine client during each authentication. Such information gathered during one successful authentication attempt is called an observation.
The attacker aims to partially or fully reconstruct $x$ by exploiting these observations.

In the binary case (\textit{i.e.}, $q=2$), the errors precisely yield the bits. If $x_i - y_i = 1$ then $x_i=1$, and if $x_i - y_i = -1$ then $x_i=0$.
This attack is related to the Coupon Collector's problem~\cite{CoupnCollector}, which involves determining the expected number of rounds required to collect a complete set of distinct coupons, with one coupon obtained at each round, and each coupon acquired with equal probability. 
\begin{ex}
    \label{ExampleAccuAtt}
    Suppose a setting with a metric space $\mathbb{Z}_2^n$ equipped with the Hamming distance. A client seeks to authenticate to an \textit{honest-but-curious} server that uses a scheme leaking $d(x,y)$ and the corresponding errors if $d(x,y) \leq \varepsilon$.
    As the client is legitimate, \textit{i.e.}, $d(x,y) \leq \varepsilon$ with a high probability, the attacker recovers the values of at most $\varepsilon$ erroneous bits. The attacker needs to collect all the bits of the client, turning this problem into a Coupon Collector problem.
    For example, let assume $x = (0,0,1,1,0,1,0)$, $\varepsilon = 3$. The attacker sets $z = (?,?,?,?,?,?,?)$. Session~1: The client authenticates with $y = (1,1,0,1,0,1,0)$. In this case, $d(x,y) = 3 \leq \varepsilon$. The values of the erroneous bits of the client are obtained, yielding $z = (0,0,1,?,?,?,?)$. Session~2: the client authenticates with $y = (0,0,0,0,1,1,0)$. In this case, $d(x,y) = 3 \leq \varepsilon$, and the attacker obtains the value of the erroneous bits of the client and updates $z = (0,0,1,1,0,?,?)$. 
    At this point, replacing the unknown values with random bits gives a vector that lies inside the acceptance ball as the number of unknown coordinates is smaller than the threshold $\varepsilon$.
\end{ex}
In biometrics, some errors happen more frequently than others. In this setup, the Weighted Coupon Collector's Problem must be considered. Each coupon (\textit{i.e.}, each error) has a probability $p_i$ to occur. Suppose that $p_1\leq p_2\leq \dots \leq p_n$ and $\sum_{i=1}^n p_i \leq 1$ then, according to Berenbrink and Sauerwald~\cite{Berenbrink2009Weighted} (Lemma 3.2), the expected number of round $E$ is such that:
\begin{equation}
   \frac{1}{p_1} \leq E \leq \frac{H(n)}{p_1}
\end{equation}
with $H(n)$ the $n$-th harmonic number. The upper bound on $H(n)$ is $1 + \log n$, which yields the expected number of rounds required to complete the collection:
\begin{equation}
   \frac{1}{p_1} \leq E \leq \frac{\ln(n)+1}{p_1}.
\end{equation}
However, while in the original problem one coupon is obtained at each round, the number of errors made by a client during an authentication session is variable, \textit{i.e.},  between $1$ and~$\varepsilon$. In this case, the expected number of rounds required before all the errors have been observed is smaller than in the case where only one error occurs at each round. Consequently, the expected number of rounds required to collect all the errors is still in $\mathcal{O}(\log{n}/p_1)$.

\begin{theorem}\label{th-acc}
    Given $\varepsilon$ a threshold, $x\in\mathbb{Z}_2^n$ a vector, $\texttt{Match}_{x,\varepsilon}$ leaks the positions of the errors and their values below the threshold, and assuming that the rarest coupon is obtained with probability $p_1=n^{-\alpha}$ with $\alpha\in\mathbb{R}_{\ge 1}$ an attacker can retrieve $x$ in $\mathcal{O}(n^{\alpha}\log{n})$.
\end{theorem}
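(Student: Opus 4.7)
The plan is to frame the attack as an instance of the Weighted Coupon Collector's Problem, where each bit of $x$ plays the role of a coupon, and then plug in the hypothesis $p_1 = n^{-\alpha}$ into the already-derived bound.

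First I would set up the reduction. During a successful authentication (which happens with high probability since the client is genuine), the server discloses the positions of the errors and, because $q=2$, the leak of the positive/negative difference $x_i - y_i$ determines $x_i$ unambiguously (as noted in the binary case just above Example~\ref{ExampleAccuAtt}). Each session therefore reveals a subset of between $1$ and $\varepsilon$ bits of $x$, and the attacker recovers $x$ as soon as every coordinate $i \in \{1,\dots,n\}$ has appeared at least once in some observed error set. Thus recovering $x$ is equivalent to collecting the full set of $n$ coupons, where coupon $i$ is ``obtaining the value of $x_i$'' and has per-round appearance probability $p_i$.

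Next I would invoke the Weighted Coupon Collector bound. Using Lemma~3.2 of Berenbrink and Sauerwald~\cite{Berenbrink2009Weighted}, which was recalled just before the theorem statement, the expected number of rounds to collect all coupons in the one-coupon-per-round variant satisfies
\begin{equation*}
\frac{1}{p_1} \;\leq\; E \;\leq\; \frac{\ln(n)+1}{p_1}.
\end{equation*}
Since here each round may reveal up to $\varepsilon$ errors simultaneously rather than a single one, the expected number of rounds can only decrease, so the upper bound $E = \mathcal{O}(\log(n)/p_1)$ still holds, as argued in the paragraph preceding the theorem. Substituting the hypothesis $p_1 = n^{-\alpha}$ gives $E = \mathcal{O}(n^{\alpha}\log n)$, which is exactly the claimed complexity.

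The only delicate point is to justify that passing from the classical one-coupon-per-round setting to the variable-size observation setting of the attack preserves the $\mathcal{O}(\log(n)/p_1)$ asymptotics. I would argue this by a simple stochastic-domination argument: one can couple the two processes so that the multi-coupon process collects at least as many distinct coupons by round $t$ as the single-coupon one, which yields the needed inequality on expected stopping times. Beyond this, the proof is essentially a one-line substitution, so there is no genuine combinatorial obstacle; the real content of the theorem is the modeling step that recasts the leakage attack as a weighted coupon collection.
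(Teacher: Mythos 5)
Your proposal is correct and follows essentially the same route as the paper: reduce the attack to the Weighted Coupon Collector's Problem, apply the Berenbrink--Sauerwald bound $E = \mathcal{O}(\log(n)/p_1)$, note that revealing up to $\varepsilon$ coupons per round can only help, and substitute $p_1 = n^{-\alpha}$. In fact you supply more detail than the paper's one-line proof, in particular the coupling/stochastic-domination justification for the multi-coupon-per-round step, which the paper only asserts informally in the preceding paragraph.
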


\begin{proof}
According to the Weighted Coupon Collector's problem and assuming that the rarest coupon is obtained with probability $p_1=n^{-\alpha}$ with $\alpha\in\mathbb{R}_{\ge 1}$, the vector $x$ is recovered in $\mathcal{O}\left(n^{\alpha}\log{n}\right)$ observations.\hfill $\blacksquare$
\end{proof}

It is worth noting that in this scenario, the attacker does not control the error. If the attacker controls the error locations, then it is possible to obtain $x$ in $\lceil n/\varepsilon \rceil$ queries. This can happen during a fault attack, akin to side-channel attacks. It should also be noted that some coordinates of a biometric data may be non-variable and, as a consequence, an attacker cannot recover them. This partial recovery attack is therefore a privacy attack, and leads to an authentication attack if the number of variable coordinates is sufficiently large (at least $n - \varepsilon$ in the binary case).

\begin{rem}
    In the non-binary case, the value $x_i-y_i$ does not provide enough information. The exact value of $x_i$ can be determined in two cases. First, if $x_i-y_i=-q+1$, then $x_i=0$. Second, if $x_i-y_i=2(q-1)$, then $x_i=q-1$. For all other cases, there is an ambiguity regarding the value of $x_i$ as $y_i$ is unknown. However, by knowing the distribution of $x_i$ and $y_i$, repeating observations yields a statistical attack.
\end{rem}

Attacks for each type of leakage along with their complexities are summarized in Figure~\ref{Leak_table}. 

\section{Concluding Remarks}
\label{Conclu}
Our investigation into the information leakage of a biometric system using privacy-preserving distance has shed light on critical security vulnerabilities that arise under various scenarios. By evaluating the impact of different types of leakage, including distance, error position, and error value, we have examined the potential risks posed to data privacy and security in practical applications. 

Our analysis encompasses `below the threshold' and `below and above the threshold' setups, allowing us to identify specific conditions under which information leakage can have a substantial effect on the overall security of the system. 

It is important to highlight that the leakage `below the threshold' does not significantly hurt the security of the system, while the leakage of `both below and above the threshold' significantly decreases the security. Indeed, the attacks exploiting the leakage `below the threshold' are primarily exponential, while those exploiting information `below and above the threshold' are mainly constant.

The accumulation attack we investigated is based on the assumption of errors uniformly distributed throughout each authentication session. 
The result of the accumulation attack could be further refined by considering a variable number of coupons, randomly drawn between $0$ and $\varepsilon$ in each round, while acknowledging the actual distribution of the errors. As far as we are aware, no previous studies provide an analysis of the distribution of the errors for any systems.
In practical scenarios, certain errors may occur more frequently than others, while some may never occur. A skewed distribution of errors substantially increases the expected number of authentications required from the legitimate user so that the server recovers the hidden template. 
Further research involves refining the accumulation attack as suggested above and exploring other distance metrics, such as $L_1$ (Manhattan) and $L_2$.

% \section{Acknowledgement}
% The authors acknowledge the support of the French Agence Nationale de la Recherche (ANR), under grant ANR-20-CE39-0005 (project PRIVABIO).

\bibliographystyle{unsrt}
\bibliography{biblio}

\end{document}